\documentclass[11pt]{article}

\usepackage[margin=1in]{geometry}
\usepackage[T1]{fontenc}
\usepackage[utf8]{inputenc}
\usepackage[expansion=false]{microtype}
\usepackage{amsmath,amssymb,amsthm,mathtools}
\usepackage{bm}
\usepackage{booktabs}
\usepackage{longtable}
\usepackage{array}
\usepackage{enumitem}
\usepackage{xcolor}
\usepackage{float}
\usepackage{hyperref}
\usepackage{authblk}
\usepackage[round,authoryear]{natbib}

\hypersetup{
  colorlinks=true,
  linkcolor=blue!55!black,
  citecolor=blue!55!black,
  urlcolor=blue!55!black,
  pdftitle={Intrinsic Computational Functionalism and Simulated Consciousness},
  pdfauthor={Ryota Kanai and Shuqin Ma}
}

\setlist[itemize]{leftmargin=1.5em}
\setlist[enumerate]{leftmargin=1.7em}

\newtheorem{definition}{Definition}
\newtheorem{principle}{Principle}
\newtheorem{assumption}{Assumption}
\newtheorem{theorem}{Theorem}
\newtheorem{proposition}{Proposition}

\newcommand{\K}{\mathfrak{K}}

\newcommand{\A}{\mathcal{A}}

\newcommand{\J}{\mathcal{J}}

\newcommand{\Sim}{\mathrm{Sim}}

\newcommand{\ICCR}{\mathrm{ICCR}}

\newcommand{\doop}{\mathrm{do}}
\newcommand{\Tr}{\mathrm{Tr}}
\newcommand{\Can}{\mathrm{Can}}
\newcommand{\MCan}{\mathrm{MCan}}
\newcommand{\eps}{\varepsilon}

\title{Intrinsic Computational Functionalism and Simulated Consciousness}

\author[1]{Ryota Kanai}
\author[2,3]{Shuqin Ma}
\affil[1]{Araya Inc., Tokyo, Japan}
\affil[2]{School of Philosophy, Fudan University, Shanghai, China}
\affil[3]{Sussex Centre for Consciousness Science, University of Sussex, Brighton, UK}

\date{}
\begin{document}
\maketitle

\begin{abstract}
A common objection to artificial or simulated consciousness is that a simulated brain is no more conscious than simulated water is wet. We address this from the perspective of Intrinsic Computational Functionalism (ICF): if consciousness is computationally constituted, it depends not on externally imposed descriptions but on the computational structures a system physically realizes in virtue of its own causal-dynamical organization. In previous work we developed Canonical Functionalism as a mathematically precise special case of this anti-interpretivist program, identifying functional states by their complete future input-output roles under a fixed interface. Here we argue that this input-output construction, though important, is incomplete: as a behavioral boundary case of ICF, it makes lookup tables and unfolded systems that preserve the same boundary behavior canonically equivalent. A consciousness-relevant canonical representation must instead include internal mechanisms, interventions, and joint readouts belonging to the relevant intrinsic organization. We therefore define a mechanism-enriched canonical structure and use it to formulate Intrinsic Causal-Computational Realization (ICCR), a realization relation preserving physical implementation, intrinsic state individuation, transition structure, intervention profiles, and the relevant agent-body-world boundary. The central result is conditional: if conscious properties are invariants of intrinsic causal-computational organization, then any system satisfying ICCR realizes the same consciousness-relevant properties, whether biological, artificial, or simulated. We discuss objections including biological naturalism and integrated information theory. We conclude that to deny consciousness to a simulation, one must identify a consciousness-relevant intrinsic causal-computational structure that the simulation fails to realize.

\end{abstract}

\noindent\textbf{Keywords:} consciousness; artificial consciousness; brain simulation; computational functionalism; intrinsic computation; canonical functionalism; causal structure; substrate independence; integrated information theory; observer-relativity.

\clearpage
\section{Introduction}

One of the most persistent objections to artificial or simulated consciousness is the claim that a simulation is not the thing simulated. A simulated hurricane does not make the computer windy; a simulated furnace does not heat the room; a simulated stomach does not digest lunch; and a simulated water does not make the computer wet. By analogy, the objection continues, a simulated brain does not generate consciousness. It merely describes or imitates the dynamics of the brain.

The objection is powerful because it sounds like common sense. However, the common-sense formulation hides several distinctions. It does not distinguish external description from physical implementation, semantic interpretation from intrinsic organization, actual trajectory matching from counterfactual causal realization, or substrate identity from preservation of consciousness-relevant structure. In the context of consciousness science, these distinctions matter. If consciousness is grounded in a biological chemical substrate, then digital brain simulation may fail. If consciousness is grounded in electromagnetic fields, quantum processes, metabolic cycles, or some other non-computational physical property, then a digital simulation may again fail. But if consciousness is grounded in information-bearing causal organization, then the mere fact that a system is called a simulation by an external observer does not show that it lacks the relevant grounds.

This paper develops that response as a consequence of Intrinsic Computational Functionalism (ICF), a position recently formulated to rescue computational functionalism from observer-relativity and triviality objections \citep{MaKanai2026ICF}. The core idea of ICF is that, if consciousness is computationally constituted, it must be constituted by computational structures that are intrinsic properties of the physical system itself. That is, the relevant computational organization must not depend on how an external interpreter labels the system's states or assigns a task-relative meaning to its trajectories. ICF therefore imposes two central requirements. First, the relevant computation must be intrinsically instantiated for the system: it must be a property the system has in virtue of its own physical organization, stable state distinctions, and causal powers, rather than a description projected onto it from the outside. Second, it must be grounded in causal-dynamical organization under intervention: the system's states must stand in relations of mutual constraint and counterfactual dependence that can be revealed by perturbations and alternative inputs. In this view, a consciousness-relevant computational property is neither an arbitrary formal mapping nor a merely behavioral summary. It is an intrinsic causal-computational structure belonging to the system's own dynamics.

This intrinsic emphasis is also in line with the aspiration to develop a universal theory of consciousness \citep{KanaiFujisawa2024}. Universality refers to the ability of a theory to determine whether and in what way any fully described dynamical system is conscious. Such universality is precisely what is required to address the consciousness of artificial systems beyond the biological brain. Crucially, it has been argued that for a theory to be universal, the determinant of consciousness must be specified as an intrinsic property of the system rather than as something that depends on an external observer's interpretation. ICF inherits this commitment directly and insists on system-intrinsic instantiation and observer-independent causal-dynamical organization.

In this line of theoretical development, Canonical Functionalism \citep{KanaiMa2026Canonical} was our first attempt to define functional states of a system mathematically. It proposed that a functional state can be defined by its complete future behavior, i.e., how the system would respond under all possible future input histories. Mathematically, this yields a canonical quotient over states, analogous to minimal-state constructions in automata theory and computational mechanics. This was a useful step because it replaces arbitrary semantic labels with a mathematically determined structure. Once the input-output interface is fixed, states that have identical future input-output roles are identified.

The present paper treats Canonical Functionalism (CF) as a special, boundary-level case of ICF. As originally formulated, CF considers only the inputs to, and outputs from, the system as a whole. In this respect, it is analogous to a behavioral criterion: it characterizes the system by the functions it realizes at its external interface, rather than by the mechanisms through which those functions are produced. Consequently, if a lookup table, an unfolded system, and a recurrent system exhibit exactly the same complete boundary behavior, CF treats their canonical structures as isomorphic. This conclusion is logically coherent within the original framework, but this boundary-level equivalence likely misses the functional structural differences of the internal mechanisms, which are likely important for the realization of consciousness. 

This paper therefore extends the canonical construction to the internal mechanisms. The consciousness-relevant canonical object should not be defined only over the individual's external input-output profile. It should be defined over a family of interventions and readouts that include the internal mechanisms by which the system generates its behavior. Instead of asking only what output the individual would produce under each input history, we ask how every admissible internal and external intervention would change every admissible internal and external readout. The result is a mechanism-enriched canonical structure (or mechanistic canonical structure, for short). It preserves the anti-interpretivist motivation of Canonical Functionalism while avoiding the boundary-behavioral collapse that makes tables and unfolded systems too easily equivalent.

The paper's central claim is therefore as follows:
\begin{quote}
If consciousness depends on the intrinsic functional structure of a system, and if that structure is identified not only at the input-output boundary but also in the system's internal mechanisms, then any physical system that realizes the same relevant structure should preserve the same consciousness-relevant properties. Even if the system is implemented as a computer simulation from an external point of view, its status as a simulation does not by itself undermine its claim to consciousness. Provided that the relevant internal organization is actually realized, the system is not merely a description or model of consciousness, but a physical implementation of the structure on which consciousness depends.
\end{quote}

This claim deliberately avoids two overstatements. It does not claim that every computer simulation is conscious. It does not claim that present digital systems already instantiate the relevant organization. It also does not refute all substrate-dependent theories of consciousness. Rather, it shows that the simple inference from ``simulation'' to ``not conscious'' is invalid. A successful anti-simulation argument must identify which consciousness-relevant intrinsic structure is missing \citep{BaltieriKanai2025}.

The aim of the paper is not to derive artificial consciousness from computational functionalism alone. Rather, it is to clarify the realization relation that must hold if a simulated or artificial system is to share the consciousness-relevant organization of a biological system. The framework therefore shifts the debate from whether a system is called a simulation to whether it realizes, in its own physical dynamics, the relevant intrinsic causal-computational structure.

The paper proceeds as follows. Section \ref{sec:icf-cf} locates the proposal in the sequence from ICF to Canonical Functionalism and explains why a mechanism-enriched canonical object is needed. Section \ref{sec:taxonomy} distinguishes simulation, representation, implementation, realization, and duplication. Sections \ref{sec:formal}--\ref{sec:iccr} give the formal framework and define Intrinsic Causal-Computational Realization (ICCR). Section \ref{sec:theorems} states the conditional preservation results. Section \ref{sec:objections} groups and answers the main objections. Sections \ref{sec:doesnot}--\ref{sec:program} clarify the limits of the result and discuss how established theoretical constructs such as the global workspace may provide a practical route toward identifying the system-intrinsic, dynamics-internal structures on which consciousness depends.

\section{From ICF to Canonical Functionalism and beyond}\label{sec:icf-cf}

Computational functionalism about consciousness begins with the idea that mental states are individuated by their causal-functional roles rather than by their biological substrate. On this view, consciousness may be multiply realizable: the same consciousness-relevant organization could, in principle, be implemented in different physical substrates. Classical computational functionalism is often associated with machine-state functionalism, multiple realizability, and the claim that the correct kind of computation may be sufficient for consciousness.

The tradition faces well-known objections. Searle's Chinese Room argues that formal symbol manipulation is not sufficient for semantics or understanding \citep{Searle1980}. Block's absent-qualia objections ask whether functionally organized but intuitively bizarre systems would really have phenomenal experience \citep{Block1978}. Putnam-style triviality arguments threaten to show that, if implementation is merely a mapping from physical states to computational states, then every ordinary open physical system implements every finite automaton \citep{Putnam1988,Chalmers1996Rock,Sprevak2018}. Maudlin's Olympia presses a related problem about counterfactual computational structure and inactive machinery \citep{Maudlin1989,Barnes1991,Klein2008}. A more recent challenge comes from IIT-style anti-functionalism. IIT holds that consciousness depends not merely on input-output or computational equivalence, but on the intrinsic cause-effect power of a physical substrate \citep{Oizumi2014,TononiKoch2015,Albantakis2023}. In this spirit, \citet{Findlay2024} argue that a stored-program computer can simulate another Boolean system with full functional equivalence while failing to be phenomenally equivalent to it according to IIT. Their result provides a sharp contemporary version of the simulation objection: even perfect functional simulation, on this view, need not preserve consciousness unless the relevant intrinsic physical cause-effect structure is also preserved.

ICF accepts these objections against naive computationalism. The central claim of ICF is that a consciousness-relevant computational property must be an intrinsic property of the physical system itself. ICF therefore gives priority to two operational criteria: system-intrinsic instantiation (C1) and causal-dynamical organization under intervention (C2) \citep{MaKanai2026ICF}. C1 requires that the relevant computational structure be grounded in distinctions the system has in virtue of its own physical organization, rather than in labels projected by an interpreter. C2 requires that these distinctions be embedded in the system's own causal dynamics, so that the relevant variables mutually constrain one another and their organization is revealed by perturbations and alternative inputs. The role of empirical theory is not to impose the computation from the outside, but to help identify the internal functional structure that the system realizes. The present paper applies this idea to the simulation objection by asking what must be preserved for a simulated or artificial system to realize the same intrinsic functional organization as a conscious biological system.

We now make this anti-interpretivist idea mathematically precise, following Canonical Functionalism \citep{KanaiMa2026Canonical}. For an interactive deterministic system
\[
S=(X,x_0,\delta,o),
\]
with input set \(I\), output set \(O\), transition function \(\delta:X\times I\to X\), and output function \(o:X\to O\), the future behavior of state \(x\) is
\[
b_x^S(w)=o(\delta^*(x,w)),\qquad w\in I^*.
\]
States are equivalent when they have identical future behavior under every possible input history:
\[
x\sim_S y \quad \text{iff}\quad \forall w\in I^*,\ b_x^S(w)=b_y^S(w).
\]
The canonical functional structure is the quotient of the reachable state space by this equivalence relation:
\[
\Can(S)=R_S/\sim_S.
\]
The canonical realization theorem states that any two systems with the same complete input-output behavior determine isomorphic canonical structures. This is the sense in which Canonical Functionalism removes observer-relative semantics: once the interface is fixed, the quotient is fixed by counterfactual behavior, not by the theorist's labels.

However, it is important to note that this construction only captures a part of the ICF since it is canonical only relative to the input and the output of the whole system and ignores the internal structure which should be also captured as the intrinsic property of the system. The canonical functional structure as described above identifies internal states only through their complete future roles with respect to the input-output interface. That is why it does not distinguish lookup tables and unfolding cases from other equivalent, possibly more intricate systems. If a lookup table is expanded so that it contains the residual behavior for every possible input history, then at the boundary-canonical level it realizes the same canonical machine. If an unfolded system preserves the same complete future behavior as a recurrent system, then \(\Can(R)\cong\Can(F)\). These are the consequences of making the external interface the sole criterion of individuation.

For consciousness, however, this boundary-canonical structure is not sufficient. A theory of consciousness may claim that various forms of internal causal structures such as recurrence, global availability, higher-order monitoring, affective regulation, temporally continuous self-conditioning, or integrated cause-effect power are important for consciousness. If so, the canonical representation relevant to consciousness cannot be only the minimal boundary machine. It must include internal interventions and internal readouts. For example, it must ask how perturbing a recurrent perceptual loop changes later perceptual integration, how disrupting a candidate workspace changes access by specialized modules, how altering a self-model variable changes confidence, memory, and action, or how changing precision-weighting variables reorganizes downstream control. The point is to formulate the realization relation so that such theories have room to specify what must be preserved.

The present paper therefore introduces a mechanism-enriched canonical construction. It retains the ICF demand that the structure be intrinsic and intervention-sensitive, but it no longer treats the individual as a black box with only one external input-output interface. It generalizes the canonical role from boundary behavior to the full family of theory-admissible intervention-readout relations. Thus, canonical Functionalism is a special case of this more generalized framework. 

\section{Simulation, implementation, realization, duplication}\label{sec:taxonomy}

The simulation objection depends on distinctions that are often left implicit. To say that one system simulates another is not yet to say that it implements the same computation, and to say that it implements a computation is not yet to say that the relevant structure is intrinsic to the system itself. These distinctions are important, as the present paper does not defend arbitrary simulation or behavioral matching. Its target is a stronger relation: the realization of an intrinsic functional structure that can later be characterized in a mechanism-enriched canonical form. The definitions in this section therefore separate representation, simulation, implementation, and intrinsic realization. This separation allows us to ask, in later sections, whether a simulated or artificial system preserves not merely the same input-output behavior, but the same internally organized structure relevant to consciousness.

\begin{definition}[External simulation]
A physical system \(M\) externally simulates a target system \(B\) relative to a modeling relation \(R\) when states, variables, and outputs of \(M\) are interpreted by an external observer as representing states, variables, and outputs of \(B\). We write
\[
\Sim_R(M,B).
\]
The relation is observer-relative because \(R\) may depend on semantic conventions, measurement choices, encoding schemes, or modeling goals.
\end{definition}

A climate model externally simulates a storm because model variables are interpreted as pressure, temperature, and velocity fields. A neural simulation externally simulates a brain because memory states and numerical arrays are interpreted as membrane potentials, synaptic weights, neurotransmitter concentrations, or spike trains.

\begin{definition}[Implementation]
A physical system \(M\) implements an abstract computation \(C\) when there is a correspondence between physical states of \(M\) and computational states of \(C\) that preserves transitions: whenever \(C\) would move from one computational state to the next on a given input, the corresponding physical states of \(M\) evolve accordingly under the physical realization of that input. This must hold across the range of inputs and perturbations over which \(C\) is defined, not merely along a single realized run. The last clause rules out a cheap trick: reading a correspondence off one recorded trajectory after the fact, which matches \(C\) only on the run that actually happened.
\end{definition}

Implementation is stronger than external simulation. It is not enough that an observer can describe \(M\) as if it were computing \(C\); the system must have physical states and transitions that actually carry the computation. This is the lesson of the implementation literature: if the correspondence is left unconstrained, implementation becomes trivial, because a clever relabelling can map almost any system onto almost any computation \citep{Chalmers1994,Chalmers1996Rock,Piccinini2015,Sprevak2018}. But implementation still leaves one thing open. It asks only whether \emph{some} transition-preserving correspondence exists. It does not ask where the states in that correspondence come from---whether they are real divisions in the system, or groupings an observer chose. That is the further question intrinsic realization adds.

\begin{definition}[Intrinsic realization]
A physical system \(M\) intrinsically realizes a computational structure \(C\) when \(M\) implements \(C\), and the states used in the correspondence are the system's own. That is, each group of physical states treated as a single computational state must be a real, causally effective distinction in \(M\): something that makes a difference to how \(M\) behaves and responds when perturbed, rather than a label an observer has placed on it. So implementation asks whether a transition-preserving correspondence exists; realization asks, in addition, whether its states are the system's own. When they are, \(C\) is not just a description that fits \(M\); it is a structure \(M\) has in virtue of itself.
\end{definition}

Realization, not implementation, is what matters for consciousness, and the gap between the two is where the simulation debate is settled. The clearest case is one where internal mechanism is not the issue at all. Take a system with a rich set of internal states, and let an observer cleverly code those states so that they track the transitions of some computation \(C\). If the coding keeps working across the relevant inputs, the system implements \(C\). But the divisions doing the work belong to the observer's code, not to the system: perturb the system, and the coded ``states'' need not behave like states of \(C\) at all. So the system implements \(C\) without intrinsically realizing it. This is the loophole that Putnam-style triviality arguments exploit (Section~\ref{sec:objections}), and it is exactly what intrinsic realization shuts: it demands that the states come from the system's own dynamics. Note that this is a question about where the states come from, not about whether we read out internal parts or only the system's outputs---that second question comes later.

ICF holds that consciousness-relevant computation must be an intrinsic organization the system realizes in virtue of its own dynamics, not just a computation an observer can map onto it. In the vocabulary of the mechanistic account of computation, intrinsic realization asks for medium-independent vehicles individuated by the system's own functional mechanisms rather than by external semantics; ICCR can be read as a counterfactual, intervention-based sharpening of that mechanistic requirement \citep{Piccinini2015}. In this paper, we formulate a canonical construction for this intrinsic realization. It aims to identify the internal organization the system itself realizes, so that systems can be compared by their consciousness-relevant intrinsic structure rather than by their boundary input-output behavior alone.

\begin{definition}[Duplication]
A physical system \(M\) duplicates another physical system \(B\) when \(M\) reproduces not only the relevant functional or computational organization of \(B\), but also the relevant physical constitution of \(B\) at the level of description in question. Duplication is therefore stronger than implementation or intrinsic realization. A duplicate of a biological brain would reproduce the brain's biological or microphysical organization, whereas a non-biological system may realize the same consciousness-relevant organization without duplicating the biological substrate.
\end{definition}

This distinction is important because the simulation objection often treats the failure of duplication as if it were a failure of realization. A computer simulation of a brain is not a biological duplicate of that brain, just as simulated water is not a duplicate of ordinary water. But ICF does not require duplication of the original substrate. It requires realization of the intrinsic organization on which consciousness depends. Thus, the relevant question is not whether an artificial system is made of the same biological material as the brain, but whether it realizes the same consciousness-relevant intrinsic structure.

These distinctions allow us to diagnose the water analogy. A simulated water system need not duplicate external-world \(\mathrm{H_2O}\). It may nevertheless realize, inside its implemented domain, the causal roles that make a virtual body float, sink, dissolve salt, or register cooling. The question is not whether the external computer is wet. The question is what domain of causal relations is under consideration and which level of realization matters.

Among these distinctions, ICF claims that only the realization is the relevant level for consciousness, and mere simulation is not sufficient, and duplication is unnecessary. A system does not become conscious simply by being described as running the right computation, nor must it duplicate the biological substrate of the brain in every physical detail. Rather, if consciousness is computationally constituted, it arises when the relevant computational organization is intrinsically realized by a physical system. The central question is therefore whether an artificial or simulated system realizes, in its own physical dynamics, the same consciousness-relevant organization as a biological brain. The later sections develop a canonical way of formulating this realization relation so that the comparison is not limited to boundary input-output behavior but extends to the internal mechanisms that generate it.

\section{Formal framework}\label{sec:formal}

This section gives a mathematical framework for the argument. While the formalization is aimed to be generalizable, the constructions below are carried out for the discrete-time, finite-state, deterministic case. Extending them to continuous time and to genuinely stochastic dynamics raises further formal questions, and we do not claim to have established them here.

\subsection{Physical systems as interventional dynamical systems}

\begin{definition}[Interventional physical system]
An interventional physical system is a tuple
\[
P=(X,\Omega,\A,\J,\{\eta_J\}_{J\in\J},\Phi).
\]
Here \(X\) is the physical state space of the system. The symbol \(\Omega\) denotes the set of admissible input or context continuations, representing the possible ways in which the system may continue to receive input from its environment. The symbol \(\A\) denotes the set of admissible interventions, including perturbations, stimulations, lesions, clamping operations, or other manipulations that can be applied to the system. The symbol \(\J\) denotes a family of admissible readout sets, specifying which parts or variables of the system are jointly read out. For each admissible readout set \(J\), the map \(\eta_J:X\to Y_J\) gives the joint readout associated with \(J\).

The symbol \(\Phi\) denotes the system's time-evolution rule. More specifically, for each time \(t\), each intervention \(a\) in \(\A\), and each input or context continuation \(\omega\) in \(\Omega\), the map
\[
\Phi_t^{a,\omega}:X\to X
\]
gives the evolution of the system under that intervention and input or context continuation. Given an initial state \(x\) in \(X\), the expression \(\Phi_t^{a,\omega}(x)\) denotes the state reached after time \(t\) from \(x\). When stochasticity is essential, \(\Phi_t^{a,\omega}(x)\) may denote a transition kernel over \(X\) rather than a deterministic next state.
\end{definition}

For a biological agent, \(X\) may include neural, bodily, and environmental variables. For a digital machine, \(X\) includes registers, memory, clocks, buses, sensors, actuators, power state, and relevant environmental couplings. For an artificial agent in a virtual world, \(X\) may include both the lower-level hardware state and the virtual agent-world state, provided that the latter is physically realized by the former. The readout family \(\J\) is what distinguishes the present framework from the boundary-canonical construction. It may include an external behavioral output, but it may also include internal variables, modules, recurrent pathways, broadcast states, confidence variables, memory traces, affective control variables, bodily variables, and joint readouts over combinations of such components.

In ICF, the formal target of the intervention-readout organization must be generated by the system itself, i.e., the admissible interventions and readouts should be constrained by the system's own physical organization.

\subsection{Mechanism-enriched counterfactual roles}

Given \(x\in X\), \(a\in\A\), \(\omega\in\Omega\), and readout set \(J\in\J\), define the readout trajectory
\[
\Tr^J_P(x,a,\omega)=\left(\eta_J(\Phi_t^{a,\omega}(x))\right)_{t\geq 0}.
\]
Two states may be identical along the actually realized behavioral trajectory but differ in what they would do under alternative internal interventions or alternative internal readouts. ICF treats the latter counterfactual profile as essential.

\begin{definition}[Mechanism-enriched counterfactual role]
The mechanism-enriched counterfactual role of state \(x\in X\) in system \(P\), relative to the chosen intervention and readout families, is the function
\[
\rho^\J_P(x):\A\times\Omega\times\J\to \bigcup_{J\in\J}Y_J^{\mathbb{T}},
\qquad
\rho^\J_P(x)(a,\omega,J)=\Tr^J_P(x,a,\omega),
\]
where \(\mathbb{T}\) is the relevant time index set.
\end{definition}

This definition extends the canonical functionalist idea that a state is individuated by its complete future behavior under all possible continuations \citep{KanaiMa2026Canonical}. The difference is that the continuations now include internal interventions and the readouts include internal mechanisms. The object is no longer only a boundary input-output profile. It is a profile over all admissible combinations of intervention and readout relations.

\subsection{Boundary canonical structure as a special case}

The earlier canonical quotient is recovered when the only readout is the external output and the only interventions are ordinary input histories.

\begin{definition}[Boundary canonical equivalence]
Suppose \(\J=\{J_{\mathrm{out}}\}\), \(\eta_{J_{\mathrm{out}}}=o\), and \(\A\) contains only the interventions corresponding to ordinary input histories. Then define
\[
x\sim^{\mathrm{bd}}_P x' \quad \text{iff}\quad \rho^\J_P(x)=\rho^\J_P(x').
\]
The boundary canonical structure is
\[
\Can(P)=R_P/\sim^{\mathrm{bd}}_P,
\]
where \(R_P\) is the reachable state set.
\end{definition}

\begin{proposition}[Canonical Functionalism as the boundary-output special case]
For a Moore-type interactive system \(S=(X,x_0,\delta,o)\), the boundary canonical structure defined above is isomorphic to the canonical functional structure \(\Can(S)=R_S/\sim_S\) as described in \citep{KanaiMa2026Canonical}.
\end{proposition}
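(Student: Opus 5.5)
The plan is to encode the Moore system \(S=(X,x_0,\delta,o)\) as an interventional physical system \(P_S\) in the boundary-output form of the definition, and then show that the two equivalence relations coincide on the same reachable set. Concretely, take the state space \(X\) and the readout family \(\J=\{J_{\mathrm{out}}\}\) with \(\eta_{J_{\mathrm{out}}}=o\). For \(\Omega\), take the set of infinite input continuations \(I^{\mathbb{N}}\), or equivalently finite words \(I^*\) padded by a fixed convention. The interventions \(\A\) are the input histories \(w\in I^*\), together with the trivial intervention. Define \(\Phi_t^{a,\omega}(x)=\delta^*(x,(a\cdot\omega)_{\le t})\), where \(a\cdot\omega\) is the concatenated input stream and \((\cdot)_{\le t}\) its length-\(t\) prefix. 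This is the only place where a modelling choice is made, so I would fix it explicitly. The reachable set \(R_P\) is then literally \(R_S=\{\delta^*(x_0,w):w\in I^*\}\), since reachability is computed with the same \(\delta\) from the same \(x_0\).

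The core step is the lemma that, for \(x,y\in R_S\), \(\rho^\J_{P}(x)=\rho^\J_{P}(y)\) iff \(b^S_x=b^S_y\). For the forward direction, note that \(\rho(x)(a,\omega,J_{\mathrm{out}})\) is the sequence \(t\mapsto o(\delta^*(x,(a\cdot\omega)_{\le t}))=b^S_x((a\cdot\omega)_{\le t})\). Equality of roles therefore gives \(b_x(u)=b_y(u)\) for every prefix \(u\) of every stream. Every finite word \(w\in I^*\) is such a prefix: take the trivial intervention and any \(\omega\) extending \(w\), or take \(a=w\) and read at \(t=|w|\). Hence \(b_x=b_y\). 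For the converse, each readout trajectory is a sequence of values of \(b_x\) on finite words, so \(b_x=b_y\) forces the trajectories, and hence the roles, to agree. Thus \(\sim^{\mathrm{bd}}_P\) equals \(\sim_S\) as relations on \(R_S\).

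The isomorphism then follows at once: the identity on \(R_S\) descends to a bijection \(R_P/\sim^{\mathrm{bd}}_P\to R_S/\sim_S\), \([x]\mapsto[x]\). To make it an isomorphism of structures and not merely of sets, I would check that it respects the induced quotient transitions \([x]\xrightarrow{i}[\delta(x,i)]\), the outputs \([x]\mapsto o(x)\), and the initial class \([x_0]\). These are well defined because \(\sim_S\) is a right congruence: \(b_x=b_y\) implies \(b_{\delta(x,i)}=b_{\delta(y,i)}\), and it implies \(o(x)=b_x(\varepsilon)=b_y(\varepsilon)=o(y)\). Since the two relations are equal, the induced structures are the same object.

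The main obstacle is not any computation but the fact that the boundary definition leaves the encoding of input histories as pairs \((a,\omega)\) and the time index \(\mathbb{T}\) underspecified. The risk is that a careless choice makes the roles compare something coarser than all finite words, for example only infinite streams with no read at intermediate times, or something finer, for example by tagging trajectories with \(a\) itself. I would handle this by stating the encoding above as the canonical one and remarking that any encoding under which the family of readout trajectories from \(x\) determines, and is determined by, the function \(w\mapsto b_x(w)\) on \(I^*\) yields the same relation. The proposition holds for all such encodings.
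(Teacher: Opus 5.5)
Your proposal is correct and follows the same route as the paper: both show that, in the boundary case, the role \(\rho^{\J}_P(x)\) encodes exactly the map \(w\mapsto b^S_x(w)\) on \(I^*\), so \(\sim^{\mathrm{bd}}_P\) and \(\sim_S\) coincide and the two quotients are isomorphic. Your explicit encoding of \((a,\omega)\) and \(\Phi\), the identification \(R_P=R_S\), and the right-congruence check fill in details that the paper's one-line proof leaves implicit.
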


\begin{proof}
In the boundary case, \(\rho^\J_P(x)\) records exactly the output trajectory generated by applying each possible future input history to \(x\). Hence \(\rho^\J_P(x)=\rho^\J_P(y)\) iff \(b_x^S(w)=b_y^S(w)\) for all \(w\in I^*\), which is precisely \(x\sim_S y\). The quotient structures are therefore isomorphic. \qedhere
\end{proof}

This proposition explains both the strength and the limitation of Canonical Functionalism. It is observer-independent once the interface is fixed, but it is interface-behavioral. It canonicalizes the individual as an input-output system. If two mechanisms differ internally while preserving every boundary continuation, the boundary quotient identifies them. This is why the lookup-table and unfolding conclusions follow. To represent consciousness-relevant mechanism, we must enrich the canonical object.

\subsection{Mechanism-enriched canonical structure}

\begin{definition}[Mechanism-enriched equivalence]\label{def:me-equivalence}
For a system \(P\), define
\[
x\sim^{\mathrm{me}}_P x' \quad \text{iff}\quad
\forall a\in\A,\ \forall \omega\in\Omega,\ \forall J\in\J,
\ \rho^{\J}_P(x)(a,\omega,J)=\rho^{\J}_P(x')(a,\omega,J).
\]
The mechanism-enriched canonical structure is
\[
\MCan(P)=R_P/{\sim^{\mathrm{me}}_P},
\]
with the induced transition, intervention, and readout structure inherited from \(P\).
\end{definition}

This definition is constitutively theory-independent. It does not say that two states are equivalent because a scientific theory chooses to treat them as equivalent. It says that they are equivalent when their full admissible intervention-readout roles coincide within the system's own dynamical organization. The qualifier ``admissible'' is therefore important: it marks interventions and readouts that belong to the physical organization and dynamical regime of the system, not arbitrary semantic projections imposed by an observer. In practice we may only approximate this complete object, but the mathematical target is not a theory-relative construction.

The mechanism-enriched quotient changes the status of the table and unfolding cases. A recurrent system \(R\) and an unfolded system \(F\) may satisfy \(\Can(R)\cong\Can(F)\) at the boundary. But once admissible interventions on recurrent pathways and readouts of their downstream effects are included, \(\MCan(R)\) and \(\MCan(F)\) come apart; Section~\ref{sec:toy} works through an explicit case in which no structure-preserving bijection of the intervention-readout families exists. Likewise, a lookup table that merely stores boundary residuals may realize \(\Can(S)\) while failing to realize \(\MCan(S)\). A table could pass only if its physical organization supports the same mechanism-enriched intervention-readout profile. At that point it is no longer a mere behavioral table in the objectionable sense; it is a physical mechanism realizing the target structure.

It is useful to locate the mechanism-enriched quotient relative to computational mechanics. The boundary quotient \(\Can(P)\) is close to the causal-state construction of an \(\eps\)-machine: states are identified when they induce the same conditional distribution over futures \citep{ShaliziCrutchfield2001,Crutchfield2012}. The mechanism-enriched quotient \(\MCan(P)\) generalizes this in two respects. First, it conditions not only on input-output futures but on the joint trajectories of internal readouts. Second, and more consequentially, it conditions on \emph{interventional} rather than merely observational continuations: two states are distinguished when they would respond differently to some admissible \(\doop(\cdot)\), even if they are statistically indistinguishable under passive observation. The minimality and uniqueness results of computational mechanics therefore transfer to the boundary case but not directly to \(\MCan\); recovering analogous results in the interventional, mechanism-enriched setting is an open formal question, and is tied to the grain-selection problem of Section~\ref{sec:grain}.

\subsection{Intrinsic partitions}

The mechanism-enriched quotient presupposes a partition of the physical state space. ICF requires that such a partition be intrinsic to the system rather than imposed by an interpreter.

\begin{definition}[Intrinsic partition]\label{def:intrinsic-partition}
A partition \(\Pi_P:X\to S\) is \emph{intrinsic} for \(P\) when it satisfies the
following conditions:
\begin{enumerate}[label=\textup{(\arabic*)},ref=I\arabic*,leftmargin=*,itemsep=3pt]
  \item\label{ip:support} \textbf{Physical support.} Transitions among cells are
        supported by physical causal processes within the system boundary.
  \item\label{ip:intervention} \textbf{Intervention tracking.} Admissible interventions produce systematic, law-governed changes in partition-state transitions and in admissible readouts.
    \item\label{ip:relabel} \textbf{Relabelling invariance.} The induced structure is
        invariant under structure-preserving relabellings of the system's variables.
  \item\label{ip:role} \textbf{Causal-dynamical individuation.} The partition states are individuated by causal-dynamical role rather than by arbitrary semantic interpretation.
\item\label{ip:lumpable} \textbf{Dynamical closure (lumpability).} The partition
        is dynamically closed under the system's evolution, in the sense that the
        cell \(\Pi_P(x)\), the applied intervention \(a\), and the input
        continuation \(\omega\) jointly determine the successor cell
        \(\Pi_P(\Phi^{a,\omega}(x))\), so that the induced transition of
        Definition~\ref{def:icc-structure} is well defined.
\end{enumerate}
\end{definition}

Intrinsicality so defined is a property of the partition relative to the system's dynamics alone. Whether a given intrinsic partition is also \emph{consciousness-relevant} is a further question, taken up in Sections~\ref{sec:grain} and~\ref{sec:program}.

The definition intentionally avoids making the partition relative to a prior theory \(T\). A theory may help us discover or approximate an intrinsic partition, but it does not constitute the partition as intrinsic. This preserves the C1 commitment of ICF: the consciousness-relevant structure must belong to the system itself. It also preserves C2: the relevant distinctions must be expressed in the system's own counterfactual responses under admissible inputs and interventions. A movie of a brain or a rock mapped by a fanciful coding scheme should fail not because we dislike the conclusion, but because the proposed partition lacks intervention tracking, and physically supported transitions.




\subsection{Intrinsic causal-computational structure}
Given an intrinsic partition \(\Pi_P:X\to S_P\), we can define the system's causal-computational organization.

\begin{definition}[Intrinsic causal-computational structure]\label{def:icc-structure}
The intrinsic causal-computational structure of \(P\) at partition \(\Pi_P\) is
\[
\K(P,\Pi_P)=(S_P,\Omega,\A,\J,\Delta_P,\Lambda_P),
\]
where \(S_P\) is the intrinsic state space, \(\Omega\) is the admissible input/context family, \(\A\) is the admissible intervention family, \(\J\) is the admissible readout family, \(\Delta_P^{a,\omega}:S_P\to S_P\) is the induced transition structure satisfying
\[
\Delta_P^{a,\omega}(\Pi_P(x))=\Pi_P(\Phi^{a,\omega}(x)),
\]
and \(\Lambda_P:S_P\times\A\times\Omega\times\J\to \bigcup_{J\in\J}Y_J^{\mathbb{T}}\) is the induced mechanism-enriched counterfactual response profile.
\end{definition}

This object is the formal target of ICF in the present paper. It is not a purely abstract computation floating free of physics, and it is not a structure made intrinsic by a theory. It is the quotient of a physical system under an intrinsic, intervention-sensitive, mechanism-enriched partition. The two constructions introduced above are related as follows: when the partition is the one induced by the mechanism-enriched equivalence \(\sim^{\mathrm{me}}_P\) of Definition~\ref{def:me-equivalence}---the finest partition that the admissible families \((\A,\Omega,\J)\) can resolve---the structure \(\K(P,\Pi_P)\) coincides with the mechanism-enriched canonical structure \(\MCan(P)\). For coarser intrinsic grains \(\K(P,\Pi_P)\) is the appropriate object, and the realization relation defined in the next section is accordingly stated at a fixed but otherwise arbitrary intrinsic grain. Later, in the discussion, we return to the practical question of how empirical theoretical constructs can guide attempts to approximate this dynamics-internal target.

\subsection{A toy model: recurrent realization versus boundary table}\label{sec:toy}

A simple example clarifies the difference between boundary equivalence and mechanism-enriched realization. Consider a target system \(B\) with two binary internal variables \(z_t=(x_t,y_t)\), a binary input \(u_t\), and external output \(o_t=x_t\). Its update rule is
\[
x_{t+1}=y_t,\qquad y_{t+1}=x_t\oplus u_t,
\]
where \(\oplus\) denotes exclusive-or. At the boundary, the target is characterized by the output sequence generated by each input history. Internally, however, the target also has a two-variable recurrent mechanism. Let the admissible interventions include \(\doop(x_t=c)\), \(\doop(y_t=c)\), and \(\doop(u_t=c)\), and let the admissible readouts include \(J_x,J_y,J_o,J_{xy}\), with \(c\in\{0,1\}\). Then the relevant structure is not just the actual output sequence; it is the profile of how the two variables and their joint state would evolve under alternative inputs and interventions.

Now compare three candidate systems. The first, \(M_R\), is a recurrent virtual machine with two physically realized registers \(r_x,r_y\), a physically realized input register \(r_u\), and an update mechanism that writes
\[
r_x' = r_y,\qquad r_y'=r_x\oplus r_u.
\]
The registers may be silicon flip-flops, memory cells under clocked control, or another stable physical medium. What matters is that the register partition is dynamically stable, writes and reads are physically causally supported, and interventions on \(r_x\), \(r_y\), or \(r_u\) change subsequent evolution in the way specified by the target. Under the partition mapping physical register states to \((x,y,u)\), \(M_R\) preserves the mechanism-enriched transition and intervention-readout profile of \(B\). It is therefore an ICCR realizer of this toy target, leaving aside any claim that the toy target is conscious.

The second, \(M_F\), is an \emph{unfolded} (feedforward) realizer. Fix a horizon \(T\) and lay out the recurrence as an acyclic circuit over distinct physical nodes \(x_0,y_0,x_1,y_1,\ldots,x_T,y_T\), wired so that
\[
x_{t+1}=y_t,\qquad y_{t+1}=x_t\oplus u_t,
\]
where each node is computed once by a feedforward gate and never overwritten. For any input stream of length at most \(T\), \(M_F\) produces exactly the output sequence of \(B\). This is the unfolding construction familiar from the unfolding argument against causal-structure theories \citep{Doerig2019,HansonWalker2019}: at the boundary, \(\Can(M_F)\cong\Can(M_R)\cong\Can(B)\), so no input-output test can separate them.

The mechanism-enriched object behaves differently. In \(M_R\) a single persistent component \(r_y\) carries the role of \(y\) at \emph{every} time step: it can be read out by \(\eta_{J_y}\) at any \(t\), and the localized intervention \(\doop(r_y:=c)\) acts on that one component and propagates through the loop, so that clamping it at time \(t\) reshapes the entire subsequent trajectory of the same two-component state. In \(M_F\) there is no single component that plays the role of \(y\) across time; the role is spread over the distinct nodes \(y_0,\ldots,y_T\). A readout ``the current value of the \(y\)-component'' is realized by one intrinsic variable in \(M_R\) but by no intrinsic variable in \(M_F\), and a \emph{sustained} single-component clamp \(\doop(r_y:=c)\) in \(M_R\) has no single-node counterpart in \(M_F\): mimicking it requires the coordinated multi-node intervention \(\{\doop(y_t:=c)\}_{t}\). Hence there is no structure-preserving bijection \(h_\A,h_\J\) between the admissible intervention-readout families of \(M_R\) and \(M_F\), and
\[
\MCan(M_R)\not\cong\MCan(M_F).
\]
If the consciousness-relevant target structure of \(B\) includes the persistent recurrent component---which a recurrence- or integration-based theory will assert, but a purely boundary-behavioral one will not---then \(M_F\) fails ICCR for \(B\) even though it passes every boundary-canonical test. This is precisely where the unfolding argument loses its grip: it establishes boundary equivalence, not mechanism-enriched equivalence. It therefore tells against input-output functionalism, and against causal-structure theories assessed on actual dynamics alone, but not against ICCR, whose verdict turns on the intervention-readout algebra rather than on the realized trajectory. Whether the persistent component is in fact consciousness-relevant is a tier-II/III question the framework deliberately leaves open.

The third, \(M_B\), is a boundary table. It stores, for every input history, the external output \(o_t\) that \(B\) would produce. At the boundary-canonical level, an ideal version of \(M_B\) can realize the same \(\Can(B)\), because it can reproduce the same residual input-output behavior. But \(M_B\) does not thereby realize \(\MCan(B)\). A perturbation that sets \(x_t=1\) in the target has no law-governed counterpart in the table unless the table is physically organized to represent and update \(x\) as an intervention-sensitive variable. A readout of \(y_t\) has no intrinsic counterpart unless the table contains a stable state component whose causal role matches \(y\). Any mapping from hidden addresses of the table to \(x\) and \(y\) is imposed by the interpreter unless it is supported by the table's own transition organization. Thus \(M_B\) may pass the boundary-canonical test while failing the mechanism-enriched test.

A trace player fails even more obviously. For one chosen initial state and one chosen input stream \(u^*\), it stores the output sequence \(o^*_0,o^*_1,o^*_2,\ldots\) and displays \(o^*_t\) at time \(t\). Along that one history, it can be output-indistinguishable from \(B\). But it does not match the counterfactual role function even at the boundary, let alone the mechanism-enriched one. The contrast is therefore not ``recurrent equations good, tables bad.'' It is realized counterfactual organization versus externally interpreted trajectory or boundary matching.

What this toy model establishes is the \emph{discriminating power} of the mechanism-enriched view: exact mechanism-enriched equivalence already separates recurrent realizers from lookup tables and unfolded systems, a distinction that boundary-canonical equivalence cannot draw. It does not yet speak to whether the relation is \emph{satisfiable} by realistic systems, for which exact equivalence is too demanding; that is the separate question of approximate realization addressed in Section~\ref{sec:approx}.

\section{Intrinsic Causal-Computational Realization}\label{sec:iccr}

We now define when one system realizes the consciousness-relevant organization of another. Let \(B\) be a biological conscious system, or a broader agent-body-world system containing a biological brain. Let \(M\) be a candidate artificial, simulated, neuromorphic, digital, analog, or hybrid system.

\begin{definition}[ICCR]
\(M\) intrinsically causally-computationally realizes \(B\), written
\[
M \models_{\ICCR} B,
\]
when there exist intrinsic partitions \(\Pi_B:X_B\to S_B\), \(\Pi_M:X_M\to S_M\), and structure-preserving bijections
\[
h:S_B\to S_M,\qquad h_\Omega:\Omega_B\to\Omega_M,
\qquad h_\A:\A_B\to\A_M,\qquad h_\J:\J_B\to\J_M,
\]
that satisfy physical realization, transition preservation, counterfactual preservation, and boundary adequacy.
\end{definition}

The conditions can be stated compactly. The definition already requires both partitions \(\Pi_B,\Pi_M\) to be intrinsic in the sense of Definition~\ref{def:intrinsic-partition}; this is what prevents the relation from being satisfied by an observer's arbitrary relabelling of states. Physical realization requires \(M\) to be a real physical system with states and transitions causally instantiated, not merely described. Transition preservation requires that, for all relevant \(s\in S_B\), \(a\in\A_B\), and \(\omega\in\Omega_B\),
\[
h\left(\Delta_B^{a,\omega}(s)\right)=
\Delta_M^{h_\A(a),h_\Omega(\omega)}(h(s)).
\]
Counterfactual preservation requires that, for all relevant \(s,a,\omega,J\),
\[
h\left(\Lambda_B(s,a,\omega,J)\right)=
\Lambda_M(h(s),h_\A(a),h_\Omega(\omega),h_\J(J)),
\]
with \(h\) applied pointwise to state trajectories and the appropriate readout isomorphism applied to readout trajectories. Because the partitions \(\Pi_B,\Pi_M\) are required to be intrinsic (Definition~\ref{def:intrinsic-partition}), the components carrying these preserved counterfactual branches must be ones that are load-bearing in the system's actual dynamics, rather than disconnected or merely appended machinery that an observer could describe but that plays no part in producing the current state. Boundary adequacy requires the realized system to include whatever brain, body, environmental, temporal, thermodynamic, or social variables fall within the system boundary and admissible families that fix the target organization \(\K(B,\Pi_B)\). In other words, the adequate boundary is the one already specified by the admissible interface for \(B\), not an independently stipulated notion: if a variable is part of the interface that individuates the target structure, the realizer must include its counterpart.

ICCR is intentionally stronger than ordinary input-output equivalence. It is also stronger than actual-trajectory matching and stronger than the earlier boundary-canonical quotient. A movie of a brain fails transition and counterfactual preservation. A static file containing a connectome fails physical realization of the relevant dynamics. A boundary lookup table may reproduce external behavior for every input history, but unless its internal organization realizes the relevant mechanism-enriched intervention-readout profile, it fails ICCR. A stored-program digital computer may or may not satisfy ICCR depending on whether its hardware-level and virtual-level organization realize the relevant intrinsic partition and intervention structure. That is a substantive scientific and philosophical question, not something settled by the word ``simulation.''

\subsection{ICCR and simulation}

A system can satisfy both \(\Sim_R(M,B)\) and \(M\models_{\ICCR}B\). There is no contradiction. \(\Sim_R(M,B)\) is a relation between \(M\), \(B\), and an external modeling relation \(R\). \(M\models_{\ICCR}B\) is a relation between intrinsic causal-computational structures. The former is observer-relative; the latter is intended to be observer-independent up to structure-preserving isomorphism.

This point is the core response to the simulation objection. A simulated brain is not a biological brain in the external substrate. But it may be a realizer of the same consciousness-relevant intrinsic organization. Whether it is depends on ICCR, not on the label ``simulation.''

\section{Conditional preservation theorems}\label{sec:theorems}

The formal payoff is a conditional preservation result.

\begin{principle}[ICF invariance]\label{pr:invariance}
Let \(Q\) be a consciousness-relevant property. If \(Q\) is grounded in intrinsic causal-computational organization, then there exists a function or rule \(F_Q\) such that
\[
Q(P)=F_Q(\K(P,\Pi_P)),
\]
and \(F_Q\) is invariant under isomorphisms of intrinsic causal-computational structure.
\end{principle}

This principle states the ICF commitment in mathematical form. It does not assert that consciousness is grounded in such organization. It says what follows if it is.

\begin{theorem}[ICCR preservation]\label{thm:preservation}
Let \(B\) and \(M\) be physical systems. Suppose \(M\models_{\ICCR}B\), and a consciousness-relevant property \(Q\) is grounded in intrinsic causal-computational organization in the sense of Principle \ref{pr:invariance}. Then
\[
Q(M)=Q(B).
\]
\end{theorem}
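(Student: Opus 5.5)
The plan is to reduce the theorem to the invariance clause of Principle~\ref{pr:invariance} by showing that the ICCR data assemble into an isomorphism of intrinsic causal-computational structures. From \(M\models_{\ICCR}B\) we obtain intrinsic partitions \(\Pi_B,\Pi_M\) and bijections \(h,h_\Omega,h_\A,h_\J\) satisfying transition preservation, counterfactual preservation, physical realization, and boundary adequacy. First I would fix the notion of isomorphism between tuples \(\K(P,\Pi_P)=(S_P,\Omega,\A,\J,\Delta_P,\Lambda_P)\). It is a quadruple of bijections on the four index sets that intertwines the induced transitions \(\Delta\) and the response profiles \(\Lambda\), with the readout trajectories transported by the readout isomorphism attached to \(h_\J\). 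With this definition fixed, I would verify that \((h,h_\Omega,h_\A,h_\J)\) is such an isomorphism \(\K(B,\Pi_B)\cong\K(M,\Pi_M)\). The intertwining of \(\Delta\) is exactly the transition-preservation equation, and the intertwining of \(\Lambda\) is exactly the counterfactual-preservation equation. Both maps are well defined because each partition is intrinsic, and in particular satisfies the lumpability condition~(\ref{ip:lumpable}), so \(\Delta_B,\Delta_M\) are functions on cells.

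Second, I would apply Principle~\ref{pr:invariance} to \(B\) and to \(M\). This gives \(Q(B)=F_Q(\K(B,\Pi_B))\) and \(Q(M)=F_Q(\K(M,\Pi_M))\). Invariance of \(F_Q\) under the isomorphism from the first step then yields \(F_Q(\K(B,\Pi_B))=F_Q(\K(M,\Pi_M))\), and hence \(Q(M)=Q(B)\). The remaining ICCR clauses do the following work. Physical realization ensures that \(M\) is a genuine physical system within the domain of the Principle. Boundary adequacy ensures that the admissible families on which \(\K(B,\Pi_B)\) is built are the ones that individuate the target, so nothing relevant to \(Q\) lies outside the compared structures.

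The main obstacle is a matching of partitions. Principle~\ref{pr:invariance} is stated with respect to ``the'' partition \(\Pi_P\), whereas ICCR only asserts the existence of some intrinsic partitions at a fixed grain. If several intrinsic grains exist, \(Q(P)\) could in principle be computed at a grain different from the one witnessing ICCR. My first approach would be to read Principle~\ref{pr:invariance} as holding at the consciousness-relevant grain and to read ICCR as quantifying over that same grain, which is how the definition of the ICCR relation is framed. Equivalently, I would state explicitly that \(\Pi_B,\Pi_M\) in the ICCR witness are the partitions at which \(F_Q\) is evaluated. If a grain-free version is wanted instead, I would try the finest partition induced by \(\sim^{\mathrm{me}}\), where \(\K\) coincides with \(\MCan\). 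The step to check there is that the ICCR bijections descend to that quotient, since isomorphic intervention-readout profiles identify exactly corresponding states. A secondary technical point is that \(\Lambda\) takes values in readout trajectories rather than state trajectories, so the readout isomorphisms \(Y_J\to Y_{h_\J(J)}\) must be part of the isomorphism notion. I would make these explicit when defining the isomorphism in the first step.
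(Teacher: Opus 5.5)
Your proposal is correct and follows the paper's proof. Transition preservation and counterfactual preservation make the ICCR bijections an isomorphism $\K(B,\Pi_B)\cong\K(M,\Pi_M)$, and the invariance clause of Principle~\ref{pr:invariance} then gives $F_Q(\K(B,\Pi_B))=F_Q(\K(M,\Pi_M))$, i.e.\ $Q(M)=Q(B)$; your extra steps (evaluating $F_Q$ at the ICCR-witnessing partitions, and building the readout isomorphisms into the notion of isomorphism) only make explicit what the paper's one-line argument leaves tacit, consistent with its later remark that the result is conditional on a fixed grain.
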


\begin{proof}
By \(M\models_{\ICCR}B\), there exist intrinsic partitions \(\Pi_B\) and \(\Pi_M\), together with structure-preserving bijections over states, inputs, interventions, and readouts, such that the mechanism-enriched transition structures and counterfactual response profiles of \(B\) and \(M\) are isomorphic. Hence
\[
\K(B,\Pi_B)\cong\K(M,\Pi_M).
\]
By Principle \ref{pr:invariance}, \(Q\) is an invariant of \(\K\) under such isomorphism. Therefore
\[
F_Q(\K(B,\Pi_B))=F_Q(\K(M,\Pi_M)),
\]
which implies \(Q(B)=Q(M)\). \qedhere
\end{proof}

This has an immediate implication for the simulation objection. Being a simulation of \(B\) does not by itself settle whether \(Q(M)=Q(B)\). As noted in Section~\ref{sec:iccr}, \(\Sim_R(M,B)\) is compatible with \(M\models_{\ICCR}B\): a system can be an external simulation of \(B\) and also realize its intrinsic causal-computational structure. Whenever \(M\) is, or can be built to be, such a realizer, Theorem~\ref{thm:preservation} applies and \(Q(M)=Q(B)\), provided \(Q\) satisfies ICF invariance. So the objection's inference from ``\(M\) is a simulation'' to ``\(M\) lacks \(Q\)'' is not valid in general: the right question is not whether \(M\) is called a simulation, but whether it can be taken to realize \(B\) under ICCR.

The conclusion is deliberately modest but important. The word ``simulation'' carries no automatic consciousness-negating force. To deny \(Q\) to \(M\), the critic must deny either the ICF grounding assumption or ICCR by specifying what consciousness-relevant intrinsic structure is missing.

\subsection{On the status of the preservation result}\label{sec:status}

Theorem~\ref{thm:preservation} follows almost immediately from the definitions, and this is by design. Once Principle~\ref{pr:invariance} is granted and ICCR is read as isomorphism of intrinsic causal-computational structure, preservation is just the statement that an invariant is preserved under isomorphism; the proof carries no hidden work. This is deliberate, and locating the content correctly matters for assessing what the paper claims. All of the substantive burden has been moved into two places: the assumption that the consciousness-relevant property is grounded in \(\K\) (Principle~\ref{pr:invariance}), and the definition of ICCR. The contribution of the paper is therefore not a theorem that forces a conclusion, but a precisification of what ``the same organization'' must mean. Thus, our present work should be reead as a reframing of the simulation question.

It is worth stating equally plainly what the result does not touch. ICCR is silent on the hard problem. Principle~\ref{pr:invariance} is a conditional about a grounding relation, not a claim that functional organization metaphysically necessitates phenomenal character. The framework is compatible with a reductive view on which \(\K\) constitutes \(Q\), and equally with a non-reductive view on which an organizational bridging principle links \(\K\) to \(Q\); on the latter reading the preservation result still holds, because it concerns only invariance under isomorphism of the grounding base. 

\subsection{Approximate preservation}\label{sec:approx}

Exact isomorphism is too strong for real physical systems. Biological brains are noisy, continuous, plastic, and partially stochastic. Artificial implementations may approximate rather than exactly duplicate their dynamics. The relevant comparison is therefore not exact sameness in every implementation detail, but preservation of the intrinsic organization at the appropriate grain. Differences below that grain do not alter the intervention-readout profile and should not count as differences in consciousness-relevant structure. Otherwise the realization relation would be too strict to apply to any realistic biological or artificial system, since no two physical systems agree in every microphysical detail. We therefore need an approximate version.

Let \(d_\K\) be a distance between intrinsic causal-computational structures. This may combine distances between transition kernels, intervention response profiles, dynamical invariants, information-flow structures, attractor landscapes, recurrence relations, or domain-specific quantities. Let \(d_Q\) be a distance or dissimilarity measure over consciousness-relevant properties, where such a measure is meaningful.

\begin{assumption}[Stability of consciousness-relevant grounding]
For a class of systems \(\mathcal{P}\), \(Q\) is stable under small changes in intrinsic causal-computational organization if there exists \(L>0\) such that
\[
d_Q(Q(P_1),Q(P_2))\leq L\,d_\K(\K(P_1),\K(P_2))
\]
for relevant \(P_1,P_2\in\mathcal{P}\).
\end{assumption}

\begin{theorem}[Approximate ICCR preservation]
If \(d_\K(\K(B),\K(M))<\eps\) and \(Q\) is stable with constant \(L\), then
\[
d_Q(Q(B),Q(M))<L\eps.
\]
\end{theorem}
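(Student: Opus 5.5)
The plan is to obtain the statement directly from the stability assumption, since it sits downstream of the Lipschitz-type hypothesis in the same way that Theorem~\ref{thm:preservation} sits downstream of Principle~\ref{pr:invariance}. Write \(\K(B)\) and \(\K(M)\) for \(\K(B,\Pi_B)\) and \(\K(M,\Pi_M)\) at the fixed intrinsic grain. I would first make explicit the standing hypothesis that both \(B\) and \(M\) lie in a class \(\mathcal{P}\) for which the Stability Assumption holds with constant \(L\). I would also check that they are a ``relevant'' pair, that \(d_\K\) is defined on their intrinsic structures, and that \(d_Q\) is meaningful on \(Q(B),Q(M)\). These conditions are implicit in the phrase ``\(Q\) is stable with constant \(L\),'' and I would state them at the start of the proof.

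The main step is then a single inequality chain. Instantiating the assumption with \(P_1=B\) and \(P_2=M\) gives
\[
d_Q(Q(B),Q(M))\leq L\,d_\K(\K(B),\K(M)).
\]
Because \(L>0\) and \(d_\K(\K(B),\K(M))<\eps\), multiplication preserves the strict inequality, so \(L\,d_\K(\K(B),\K(M))<L\eps\). Combining the two yields \(d_Q(Q(B),Q(M))<L\eps\), which is the claim. Positivity of \(L\) is what makes the strict inequality go through; if \(L=0\) were allowed, the conclusion would have to be weakened to \(\leq\).

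As a consistency check, I would note that the exact case recovers Theorem~\ref{thm:preservation}. If \(M\models_{\ICCR}B\), then \(\K(B)\cong\K(M)\). Provided \(d_\K\) is chosen to be isomorphism-invariant, so that it vanishes on isomorphic structures, the stability assumption forces \(d_Q(Q(B),Q(M))\le 0\). When \(d_Q\) is a genuine metric, this gives \(Q(B)=Q(M)\).

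The hardest part is not the arithmetic but making sure the quantities are well defined. That means \(d_\K\) must compare structures defined over possibly different index families \((\Omega,\A,\J)\). I would handle this by requiring \(d_\K\) to be an infimum over admissible correspondences \(h,h_\Omega,h_\A,h_\J\) of a discrepancy between transition and response profiles. With that choice, \(d_\K\) is automatically isomorphism-invariant, and the consistency check above goes through.
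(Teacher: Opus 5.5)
Your proof is correct and matches the paper's, which says only that the result is immediate from the stability assumption. Your explicit instantiation with \(P_1=B\), \(P_2=M\), followed by using \(L>0\) to carry the strict inequality \(d_\K<\eps\) through to \(L\,d_\K<L\eps\), is exactly the content of that one-line proof; the consistency check and the well-definedness remarks are optional extras.
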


\begin{proof}
Immediate from the stability assumption. \qedhere
\end{proof}

This result clarifies the role of digital approximation. A digital simulation need not reproduce every microphysical detail if consciousness-relevant organization is stable under the approximation and if the preserved structure includes the relevant intrinsic organization. Conversely, if small microphysical differences can produce large phenomenal differences, then the required \(d_\K\) must include those microphysical variables. The framework is neutral about this substantive issue; it requires the missing or preserved structure to be specified.

\section{Objections and replies}\label{sec:objections}
The objections to simulated consciousness are best understood not as a list of isolated complaints but as four families of challenges. Table \ref{tab:objection-roadmap} gives the roadmap. The replies below do not claim that every simulation is conscious. They show what each objection would have to establish in order to defeat ICCR.

\begin{table}[H]
\centering
\small
\begin{tabular}{p{0.20\linewidth}p{0.31\linewidth}p{0.39\linewidth}}
\toprule
\textbf{Family} & \textbf{Representative objections} & \textbf{ICF/ICCR response}\\
\midrule
Semantic and implementation objections & Simulation is not reality; water does not wet; Searle; Putnam-style triviality & Separate external description from intrinsic realization. Arbitrary labels, boundary-only profiles, and actual-trajectory matching do not count; physically supported intervention-readout profiles do.\\
\addlinespace
Counterfactual and absent-qualia objections & Maudlin's Olympia; Block's China Brain; absent, fading, and dancing qualia & Preserved counterfactual branches must be carried by components that are load-bearing in the system's actual dynamics, not by idle or merely appended machinery. Weird realizers are rejected when they fail ICCR, not merely because they are unfamiliar.\\
\addlinespace
Substrate and physical-realization objections & IIT; biological naturalism; thermodynamics; continuous dynamics; quantum or non-computable physics & If these structures are consciousness-relevant, they must be included in the target. The burden is to specify the missing structure, not merely invoke the word ``simulation.''\\
\addlinespace
Boundary and identity objections & Embodiment, situatedness, duplication, branching, epistemic indistinguishability & Expand the target boundary when the theory requires it. Conscious-process realization is distinct from personal identity and from mere indistinguishability.\\
\bottomrule
\end{tabular}
\caption{A compressed roadmap of the major objections and the corresponding ICCR strategy.}
\label{tab:objection-roadmap}
\end{table}

\subsection{Semantic and implementation objections}

The simplest objection says that a simulation of \(X\) is not \(X\). A simulated brain is not a biological brain, and a simulated body of water does not wet the external computer. These claims are true at the base-substrate level, but they do not decide the question at issue. ICCR does not infer consciousness from the fact that a system is described as a brain simulation. It asks whether the candidate system physically realizes the target's intrinsic causal-computational organization. If consciousness depends on biological chemistry as such, then ordinary digital simulation may fail. If consciousness depends on intrinsic causal-computational organization, then substrate identity is not the relevant standard. The slogan ``a simulation is not the real thing'' therefore equivocates between not being the same substance and not realizing the same consciousness-relevant structure.

Searle's Chinese Room makes a deeper point: syntax is not semantics, and formal rule-following does not by itself yield understanding \citep{Searle1980}. ICCR accepts this against bare formalism. It does not claim that a string of symbols, under an external interpretation, is conscious. It requires physically realized, mechanism-enriched causal organization. If the Chinese Room only manipulates symbols according to a rulebook while lacking the relevant intrinsic organization, it fails ICCR. If, however, the whole room were physically organized so as to realize the same mechanism-enriched intervention-readout structure as a genuinely understanding system, then the opponent would need to say what further structure is missing. The reply does not refute Searle by intuition. It relocates the burden from the word ``syntax'' to the specification of the relevant causal organization.

Putnam-style triviality arguments say that any physical system can implement any computation under a suitable mapping \citep{Putnam1988}. This is a serious problem for naive implementation. ICCR blocks it by refusing to count arbitrary mappings as realization. The partition must be physically supported, intervention-tracking, and invariant under relabelling. A rock may be externally mapped onto the state graph of a mind, but it will not thereby have the mechanism-enriched counterfactual response profile of that mind. This is the boundary-table case \(M_B\) of Section~\ref{sec:toy} in its limiting form: an externally imposed assignment from physical states to computational states has no law-governed counterpart to the target's interventions \(\doop(\cdot)\) unless the candidate is physically organized to update those very variables, and the admissibility criteria of Definition~\ref{def:intrinsic-partition} are exactly what an arbitrary Putnam mapping fails to satisfy. Triviality arguments succeed against tier-I label assignment. They do not establish that all tier-III dynamics-internal organization is observer-relative.

\subsection{Counterfactual and absent-qualia objections}

Maudlin's Olympia targets the use of counterfactual structure in computationalism \citep{Maudlin1989}. If consciousness depends on counterfactuals, why would inactive machinery that would have been used under different conditions matter to present consciousness? ICCR answers through its requirement that the partition be intrinsic (Definition~\ref{def:intrinsic-partition}). A counterfactual branch is preserved only if it is carried by a component that the intrinsic partition recognizes, and a component idle on the actual trajectory fails the intervention-tracking and relabelling clauses of that definition. Disconnected backup machinery, appended rulebooks, or physically idle devices therefore do not become part of the current intrinsic causal structure merely because an observer can describe what they would have done in a different setup. This aligns ICCR with dispositional and mechanistic responses to superfluous-structure problems \citep{Barnes1991,Klein2008}.

The toy model of Section~\ref{sec:toy} makes this criterion concrete and shows that it does not simply restate the dispositional reply. The worry Maudlin presses is that, along the actual trajectory, the conscious and the merely counterfactually-equipped systems are physically identical, differing only in idle machinery; so why should the idle part matter? The mechanism-enriched answer is that the relevant counterfactual branches are not carried by idle parts at all. In \(M_R\) the component that supports the preserved branch---the persistent register \(r_y\)---is read and written on the \emph{actual} run at every step, so the same physical variable whose perturbation \(\doop(r_y:=c)\) defines the counterfactual is dynamically load-bearing in generating the present state. Olympia's armature, by contrast, is precisely the kind of component that is idle on the actual trajectory: removing it leaves the present dynamics untouched, and by the relabelling- and intervention-tracking clauses of Definition~\ref{def:intrinsic-partition} it does not enter the intrinsic partition. The intrinsic-partition requirement therefore turns not on whether some machinery \emph{could} have responded, but on whether the component realizing each preserved branch is itself active in producing the current state---the very property that separates the recurrent realizer from an appended look-up or a dormant backup. This locates the dividing line between trace-replay triviality (too few counterfactuals) and Olympia (counterfactuals carried by inactive structure) in a single condition rather than in two ad hoc stipulations.

Block's China Brain and absent-qualia cases raise the worry that functionally equivalent but bizarre systems may lack experience \citep{Block1978}. The mechanism-enriched framework gives two replies. First, many bizarre cases fail the relevant conditions: they lack the temporal density, load-bearing counterfactual structure, intervention profiles, or internal readout structure of the biological target. Second, if a bizarre system really did satisfy ICCR for the correct theory, then rejecting its consciousness merely because it is bizarre would no longer be an argument. Chalmers's fading and dancing qualia arguments push in the same direction: if functional organization is preserved in fine-grained, causally integrated substitution, it becomes difficult to explain how phenomenal properties could change while all functional sensitivity to that change is absent \citep{Chalmers1995,Chalmers1996Mind}.

The mechanism-enriched construction also sharpens the lookup-table and unfolding debates. Boundary Canonical Functionalism treats a sufficiently expanded table or unfolded network as equivalent when the complete boundary profile is identical. ICCR does not automatically preserve that verdict. If the intrinsic target structure includes internal recurrent mechanisms, internal broadcasts, or intervention-sensitive self-models, then those structures must be preserved. A table or unfolded system passes only when it physically realizes the same mechanism-enriched structure. The result is neither an uncritical acceptance of tables nor a blanket rejection of them; it is a demand for structural specificity.

\subsection{Substrate and physical-realization objections}
The strongest substrate objection comes from IIT-style anti-functionalism. IIT holds that consciousness depends on a system's intrinsic cause-effect structure in physical terms, not on input-output or computational equivalence alone \citep{Oizumi2014,Albantakis2023,Findlay2024}. This is a serious challenge for any simple functionalist account: if a neural simulation fails to realize the intrinsic structure that IIT takes to be necessary for consciousness, then it fails to realize the corresponding experience.

IIT and ICF in fact share a motivation: both hold that the structure relevant to consciousness must be intrinsic to the system itself. The question is how to identify that structure. Some IIT-based arguments against simulated consciousness begin by classifying the candidate as a stored-program digital computer, and then assess its causal powers at the hardware level. From the perspective of ICF, this risks smuggling in an external standpoint: if the intrinsic perspective is taken seriously, the analysis should start not from the external category of units in a digital computer, but from the system's own states, transitions, and organization.

This reframing is itself a substantive position, not a neutral starting point, and it is worth being clear about where it disagrees with IIT. The disagreement is about which grain counts (Section~\ref{sec:grain}), and it has two parts.

The first is how the units themselves are fixed. IIT evaluates integrated information over partitions of a given set of units, and for a digital computer those units are taken to be the hardware elements. But tying the units to the hardware is itself an external choice. From the intrinsic perspective a unit need not be a physical element: it could be a combination of elements, such as a linear mixture of hardware variables, and such a basis may capture the system's own organization better than the hardware decomposition does. Which basis is the genuinely intrinsic one is left open, and IIT could in principle be applied over such non-standard units as well. The second part is exclusion. IIT's exclusion postulate says that a system specifies consciousness at only one grain, the maximally irreducible one. ICF does not assume this and maintains the possibility that a system can specify consciousness at more than one grain at once. In such a scenario, a simulation that realizes the consciousness-relevant organization at its own coarse grain would, on the ICF view, support experience at that grain, whether or not a finer grain does too. Our aim is not to claim that exclusion is wrong, but whether simulation can realize consciousness depends on the veracity of the exclusion axiom. 

Biological naturalism makes a related claim: consciousness may require biological causal powers of neurons, glia, metabolism, membranes, neurotransmitters, or organismic self-maintenance \citep{Seth2025}. ICCR again makes the issue explicit. If those biological features are constitutive, they belong in the target structure. A digital simulation lacking them fails. But the proponent of substrate dependence must identify the consciousness-relevant causal role of the substrate. Mere appeal to biological material as such does not explain why that material is necessary. If the relevant role can be implemented in a different medium, substrate difference is not decisive. If it cannot, ICCR will say that the artificial system fails.

Continuous dynamics, thermodynamics, electromagnetic fields, quantum effects, and non-computable physics should be handled in the same way. They are possible constituents of the intrinsic structure, not automatic refutations of ICCR. If consciousness depends on precise continuous dynamics, the approximation metric \(d_\K\) must include the relevant continuous variables. If it depends on thermodynamic non-equilibrium, energy flow, or temporal continuity, those features must be part of the target. If it depends on non-computable physics, then ordinary digital computation may fail, while analog or other physical implementations remain to be evaluated. ICCR is substrate-general only to the extent that the relevant intrinsic structure is preserved. It is not substrate-blind.

\subsection{Boundary, approximation, and identity objections}

Embodied and situated approaches object that consciousness is not generated by a brain in isolation. It may depend on bodily regulation, affective loops, action, sensorimotor contingencies, and environmental embedding. ICCR can accept this without difficulty. The target is then not an isolated brain but the relevant agent-body-world system. A candidate realization must include the virtual or physical body, environment, temporal coupling, and control loops if they are part of the constitutive organization. This broadens the realization target; it does not support the simple simulation objection. If the mind is partly world-involving, as extended-mind arguments suggest, then the correct ICCR boundary is also world-involving \citep{ClarkChalmers1998}.

Duplication and branching raise a different issue. If two simulations realize the same structure, do they produce one consciousness or two? If two physically distinct systems realize the relevant intrinsic organization, there are two instances of the relevant process, just as similar biological brains are not one person. Pausing and resuming a simulation raise questions about temporal continuity; branching raises questions about future identity. These problems matter ethically and metaphysically, but they do not show that realization fails.

Finally, epistemic indistinguishability is not the foundation of the argument. A simulated scientist might be unable to tell whether she is in a base world or a virtual world, but indistinguishability alone does not imply phenomenal equivalence. ICCR relies on preservation of intrinsic causal-computational organization, not merely on indistinguishability from the inside. Internal indistinguishability is a useful intuition only when it tracks structural preservation. If two systems are indistinguishable in behavior and self-report but differ in a consciousness-relevant intrinsic structure, ICF allows them to differ phenomenally. The burden again is to identify that structure.

\section{What the argument establishes}\label{sec:doesnot}

The argument establishes a conditional preservation result and a dialectical burden shift. The conditional result is straightforward: if consciousness is grounded in intrinsic causal-computational organization, and if an artificial or simulated system realizes that organization under ICCR, then the system realizes the corresponding consciousness-relevant properties. The dialectical result is equally important. A critic who denies consciousness to a brain simulation cannot stop at the phrase ``simulation is not instantiation.'' That phrase is true of static descriptions, arbitrary semantic mappings, mere trace players, and boundary-output devices that fail to realize mechanism-enriched structure. It is not a general argument against physically realized artificial systems. The critic must identify a missing consciousness-relevant structure---biological, thermodynamic, electromagnetic, quantum, IIT-theoretic, embodied, or otherwise---and explain why that structure is required.

The argument is deliberately weaker than a full proof of substrate independence. It does not show that consciousness is in fact computationally constituted. It does not show that ordinary digital computers already satisfy ICCR for human-like consciousness. It does not make boundary input-output equivalence sufficient, and it explicitly excludes movies, static descriptions, arbitrary semantic mappings, and trajectory replays. It does not show that biological details are irrelevant, that IIT is false, or that personal identity is preserved under uploading or copying. These limitations are not concessions to the simulation objection. ICF aims to identify the correct realization relation first, and then ask which physical systems satisfy it.

The most important clarification concerns the relation to Canonical Functionalism. The earlier boundary-canonical theorem remains valid. If two systems have the same complete input-output profile over a specified interface, their boundary canonical structures are isomorphic. What the present paper adds is that this may not be the canonical structure relevant to consciousness. If consciousness depends on internal mechanism, the correct canonical object must include mechanism-enriched intervention-readout profiles. The table and unfolding results therefore become diagnostic rather than decisive. They show what follows from a boundary-level functionalism. The present theory explains how to go beyond that boundary level without falling back into observer-relative mapmaking.

The resulting thesis is that the simple simulation objection is not a general objection to artificial consciousness. It succeeds only when the candidate system fails to realize the relevant intrinsic causal-computational organization. Once the relevant structure is specified, the dispute becomes substantive: which variables matter, which interventions reveal them, which readouts define the mechanism, which boundaries are adequate, what approximation tolerances are allowed, and which physical media can instantiate the required organization?

\section{Discussion}\label{sec:discussion}

\subsection{Computational correlates and organizational invariance}

The present framework can also be understood in relation to two closely connected ideas in the philosophy and science of consciousness: computational correlates of consciousness and organizational invariance. 

The search for neural correlates of consciousness asks which neural mechanisms are minimally sufficient for a conscious experience \citep{CrickKoch1990,Chalmers2000NCC}. On the other hand, a computational correlate of consciousness asks a related but more abstract question: which computational structures or processes correspond to conscious experience \citep{Cleeremans2005CCC}. This shift is important because the same consciousness-relevant organization might, in principle, be realized in systems that do not share the same biological substrate.

ICF agrees that the search for computational correlates is necessary, but it adds a constraint. A computational correlate of consciousness cannot be merely a model that an observer uses to describe a system, nor a pattern that matches behavior at the system boundary. If it is to play a constitutive role, it must be intrinsically realized by the system itself. In this respect, ICCR can be viewed as an attempt to specify when a computational correlate is not merely correlational or descriptive but a candidate intrinsic realizer of consciousness. The relevant structure must be grounded in the system's own state distinctions, internal organization, intervention profiles, and counterfactual dependencies.

This also clarifies the relation to Chalmers's principle of organizational invariance. Chalmers argues that conscious experience is preserved across systems with the same functional organization \citep{Chalmers1996Mind,Chalmers1995}. The present paper is sympathetic to this idea, but it refines the notion of ``same organization.'' Boundary-level input-output equivalence is too weak, because lookup tables, unfolded systems, and recurrent systems may share the same complete boundary behavior while differing in their internal mechanisms. Conversely, microphysical duplication is too strong, because ICF does not assume that every biological or microphysical detail is consciousness-relevant. The target is instead the intrinsic functional organization realized by the system: the internal structure that supports the relevant state distinctions, transitions, interventions, and readouts.

In this interpretation, the ICCR provides an intrinsic version of organizational invariance. If two systems realize the same intrinsic organization relevant to consciousness, then they should preserve the same properties relevant to consciousness, regardless of whether one is biological and the other artificial or simulated. However, equivalence must be assessed at the level of the realized internal organization, not merely at the level of external behavior. This is the sense in which the present account both inherits and strengthens the organizational-invariance tradition: it keeps the idea that organization matters, while adding the ICF requirement that the relevant organization must be system-intrinsic rather than observer-imposed.

Here, a clarification on novelty is warranted. The formal machinery underlying the mechanism-enriched construction is not itself new. Minimal-state and canonical-form constructions are standard in automata theory; the identification of states by their conditional futures, together with the associated minimality and uniqueness results, is the substance of computational mechanics and the $\varepsilon$-machine \citep{ShaliziCrutchfield2001,Crutchfield2012}. The use of interventions to individuate causal variables is a common approach in the causal-modelling literature \citep{Pearl2009} and is central to integrated information theory's cause-effect analysis \citep{Oizumi2014,Albantakis2023}. Read as formal results, the quotients defined above add little to these traditions. The contribution of this paper is to show that these tools, developed for the analysis of computation and causal structure, are exactly what is needed to make computational functionalism about consciousness more precise---and, in particular, to adjudicate the question of consciousness in artificial systems. What has been missing is the recognition that a mathematical description of a system's causal dynamics can itself serve to identify a physically realized intrinsic organization, without relying on an externally imposed interpretation. Bringing this clarity to the simulation debate, and to computational functionalism about AI consciousness more broadly, is the work the present paper aims to do.

\subsection{Implications for ICF tiers and a research program}\label{sec:program}

The formal definitions above are constitutive rather than methodological. They do not say that a theory of consciousness creates the relevant structure by selecting variables, readouts, or interventions. If ICF is correct, the consciousness-relevant structure should be found in a tier-III, dynamics-internal feature of the system itself. It is grounded in system-intrinsic instantiation (C1) and causal-dynamical organization under intervention (C2). In this sense, the mechanism-enriched canonical structure is the ideal target: an observer-independent structure fixed by the system's own organization, not by our descriptions of it.

This ideal does not remove the practical role of theory. For real brains and sophisticated artificial agents, we cannot measure, perturb, or model every physical degree of freedom. We do not know in advance which variables, boundaries, interventions, readouts, or grains will reveal the intrinsic structure relevant to consciousness. Scientific theories therefore have to tackle this epistemic challenge and can guide the search for the tier-III target. In particular, the grain-selection problem discussed in Section~\ref{sec:grain} is exactly the kind of question theory that must help to adjudicate empirically.

Tier-II constructs are useful in this practical sense. Global Workspace Theory, for example, proposes candidate variables and relations involving competition, ignition, broadcast, and global availability \citep{Baars1988,Dehaene2017}. Higher-order approaches propose monitoring and self-representation. Recurrent processing approaches propose local or global recurrent integration. Predictive and active-inference approaches propose generative models, precision control, and embodied action. Indicator-based approaches to AI consciousness can also be interpreted as theory-plural attempts to specify where to look for relevant structures \citep{Butlin2023,Butlin2026}. These constructs are not themselves the final intrinsic structure. Instead, they are disciplined hypotheses about which interventions and readouts may help reveal it.

The mechanism-enriched canonical framework offers a way to operationalize this search while preserving the ICF distinction between epistemology and constitution. For instance, a global workspace hypothesis may lead us to perturb candidate workspace variables and read out effects on specialized modules, report, memory, and flexible control. A recurrent processing hypothesis may lead us to disrupt recurrent pathways and measure changes in integration, stability, and temporal continuity. If the predicted intervention-readout relations fail, the tier-II construct is wrong or incomplete. If they succeed across subjects, perturbations, substrates, and tasks, the construct begins to approximate the tier-III dynamics-internal grain.

The framework therefore occupies a middle position where it proposes a transition from practical empirical research to an idealized mathematical target. Here, the ideal target is a mechanism-enriched canonical structure derived from the system's own dynamics, whereas the practical route uses tier-II theoretical constructs to approximate that target. If consciousness science eventually identifies organizational invariants, ICCR provides the realization relation needed to ask whether biological, artificial, or simulated systems instantiate them.

\subsection{The grain-selection problem}\label{sec:grain}
The research program just sketched runs into one open problem that deserves to be stated plainly, because it is where the present framework is least settled. An intrinsic partition, in the sense of Definition~\ref{def:intrinsic-partition}, is required to track stable, causally effective distinctions in the system's own dynamics; but these criteria do not explicitly pick out a single scale. A system can admit several intrinsic partitions at different grains, and since \(\K(P,\Pi_P)\) depends on the partition, the claim that \(M\) and \(B\) share the same intrinsic organization can be considered only if the relevant grain is fixed. This grain-selection problem was not addressed in this paper.

As we saw in discussing the IIT-based objection (Section~\ref{sec:objections}), this is exactly where ICF and integrated information theory part ways. IIT answers the grain problem with a built-in principle: its exclusion postulate holds that consciousness occurs at the spatiotemporal scale at which integrated information is maximal \citep{Oizumi2014,Albantakis2023}. ICF as developed here has no such maximization principle, though the framework could be extended to include one. Several intrinsic criteria could play that role---integrated information, effective information, or mechanistic individuation under intervention. Although both ICF and IIT agree that the consciousness-relevant grain must be defined as an intrinsic property of the system, what is the right criterion for selecting a particular grain remains an open question. 

However, our main results are not affected by the grain selection problem. Our central claims are conditional: once a grain is fixed, ICCR asks whether \(B\) and \(M\) share the same intrinsic structure at that grain, and the preservation theorem concludes that they then share the same consciousness-relevant properties. Which grain is the right one is not a premise of these claims, so leaving it open does not weaken them---provided only that the grain is constrained by the system's own dynamics rather than chosen by an external interpreter, which is what the criteria of Definition~\ref{def:intrinsic-partition} secure. Settling which grain is consciousness-relevant is then a task for the empirical program of the previous section, not a gap that the conditional argument must close in advance.

\subsection{From intrinsic realization to structure extraction}

The present framework identifies what must be preserved for simulated consciousness, but it does not yet provide a full theory of how to extract the relevant intrinsic structure from an arbitrary physical system. This is an important limitation. IIT, for example, is more developed in this respect: it offers a formal recipe for deriving cause-effect structures and evaluating their integration. ICCR should therefore not be presented as a rival measure of consciousness. Its role is to specify a realization constraint. It says that a consciousness-relevant structure, whatever its final formal characterization, must be intrinsic to the system, must include internal mechanisms rather than boundary behavior alone, and must be preserved under the relevant counterfactual patterns of intervention and readout.

This limitation also suggests the next step. An ICF-based theory of consciousness should develop methods for extracting effective intrinsic structures from complex systems. Such methods would identify stable state distinctions in the system's own dynamics, construct intervention-readout profiles over internal and boundary variables, and search for coarse-grainings that preserve these profiles while discarding irrelevant microphysical variation. The goal is not to impose a theory-dependent partition from the outside, but to discover dynamically constrained structures that the system itself realizes. In this sense, the problem of consciousness becomes partly a problem of intrinsic structure extraction.

Several criteria may guide this development. A candidate structure should be stable enough to support state distinctions, causally effective enough to constrain future evolution, robust enough to survive noise and perturbation, and rich enough to account for the phenomena associated with consciousness, such as integration, global availability, temporal unity, reportability, and self-monitoring. These criteria do not yet amount to a complete measure. They define a research program: to combine the generality of ICCR with the formal depth of IIT-like intrinsic-structure analysis.

\section{Conclusion}

The simulation objection begins from a correct observation: a simulation is not automatically an instantiation of the thing it simulates. A static description of a brain is not conscious. A movie of a brain is not conscious. A lookup table with arbitrary labels is not conscious merely because an observer maps it onto a mind. A boundary-output device that preserves only external behavior may fail to realize the internal mechanism relevant to consciousness. In each of these cases the objection is right. Its error is to generalize from them: none of these cases shows that a physically realized artificial system cannot instantiate the intrinsic causal-computational organization that grounds consciousness.

Intrinsic Computational Functionalism gives the right framework for this issue. It rejects naive computationalism by requiring system-intrinsic instantiation and causal-dynamical organization under intervention. It also rejects naive anti-computationalism by distinguishing observer-relative maps from observer-independent structures. Canonical Functionalism supplied a mathematically precise boundary case of this program: a canonical quotient over complete future input-output roles. The present paper extends that idea to mechanism-enriched canonical structure, where internal interventions and readouts are included in the target.

Once this distinction is in place, the simulation question becomes precise:
\begin{quote}
Does the candidate system realize the consciousness-relevant intrinsic causal-computational organization, including the relevant internal mechanisms, counterfactual intervention-readout profiles, and boundary-adequate agent-world structure?
\end{quote}

If yes, then substrate difference alone is not a reason to deny consciousness. If no, then the simulation fails. The word ``simulation'' does not decide the matter.

The resulting view is neither simple functionalism nor biological chauvinism. It is a constrained, intrinsic, mathematical functionalism. It turns the debate from metaphor to mathematical structure: not ``simulated water does not wet,'' but ``which causal-computational organization is realized, at which intrinsic grain, under which interventions and readouts, and with which invariants?'' That is the question a theory of simulated consciousness must answer.

\bibliographystyle{plainnat}
\bibliography{icf_simulation_realization_rev15}

\end{document}